\def\openone{\leavevmode\hbox{\small1\kern-3.8pt\normalsize1}}
\def\mod{\,\, {\rm mod}\,\,}
\def\11{\mathbb{I}}
\newtheorem{definition}{Definition}[section]
\newtheorem{lemma}[definition]{Lemma}
\newtheorem{theorem}[definition]{Theorem}
\newcommand{\tr}{\mathop{\rm Tr}\nolimits}
\newcommand{\ket}[1]{|#1\rangle}
\numberwithin{equation}{section}
\DeclareRobustCommand\openone{\leavevmode\hbox{\small1\normalsize\kern-.33em1}}
\newcommand{\be}{\begin{equation}}
	\newcommand{\ee}{\end{equation}}
\newcommand{\bea}{\begin{eqnarray}}
	\newcommand{\eea}{\end{eqnarray}}
\newcommand{\beas}{\begin{eqnarray*}}
	\newcommand{\eeas}{\end{eqnarray*}}
\DeclareFontFamily{U}{mathx}{\hyphenchar\font45}
\DeclareFontShape{U}{mathx}{m}{n}{<-> mathx10}{}
\DeclareSymbolFont{mathx}{U}{mathx}{m}{n}
\DeclareMathAccent{\widebar}{0}{mathx}{"73}
\DeclareMathAccent{\widehat}{0}{mathx}{"70}
\DeclareMathAccent{\widecheck}{0}{mathx}{"71}
\def\BibTeX{{\rm B\kern-.05em{\sc i\kern-.025em b}\kern-.08em
    T\kern-.1667em\lower.7ex\hbox{E}\kern-.125emX}}
\begin{document}

\title{Information-Theoretic Limits of Quantum Learning  via Data Compression
\thanks{This project was funded by European Union’s Horizon 2020 Research and Innovation Program through H2020-FETOPEN-PHOQUSING (Grant Number: 899544);
Quantum Information Center Sorbonne (QICS); and
Sandoz Family Foundation-Monique de Meuron Program for Academic Promotion.}
}

\author{\IEEEauthorblockN{Armando Angrisani}
\IEEEauthorblockA{
\textit{LIP6, CNRS}, \\ \textit{Sorbonne Université} \\
Paris, France
 \\ \emph{and} \\
 \textit{Institute of Physics} \\
\textit{Ecole Polytechnique Fédérale de Lausanne}\\
Lausanne, Switzerland 
\\
\href{mailto:armando.angrisani@epfl.ch}{armando.angrisani@epfl.ch}}
\and
\IEEEauthorblockN{Brian Coyle}
\IEEEauthorblockA{\textit{QC Ware} \\
{Palo Alto, USA}\\
Paris, France \\
 \emph{and} \\
\textit{School of Informatics }\\
\textit{University of Edinburgh},\\ Edinburgh, United Kingdom }
\and
\IEEEauthorblockN{Elham Kashefi}
\IEEEauthorblockA{\textit{School of Informatics }\\
\textit{University of Edinburgh},\\ Edinburgh, United Kingdom \\ \emph{and} \\
\textit{LIP6, CNRS}, \\ \textit{Sorbonne Université} \\
Paris, France}
}

\maketitle

\begin{abstract}
Understanding the power of quantum data in machine learning is central to many proposed applications of quantum technologies. While access to quantum data can offer exponential advantages for carefully designed learning tasks and often under  strong assumptions on the data distribution, it remains an open question whether such advantages persist in less structured settings and under more realistic, naturally occurring distributions. Motivated by these practical concerns, we introduce a systematic framework based on quantum lossy data compression to bound the power of quantum data in the context of probably approximately correct (PAC) learning.
Specifically, we provide lower bounds on the sample complexity of quantum learners for arbitrary functions when data is drawn from Zipf's distribution, a widely used model for the empirical distributions of real-world data. We also establish lower bounds on the size of quantum input data required to learn linear functions, thereby proving the optimality of previous positive results.
Beyond learning theory, we show that our framework has applications in secure delegated quantum computation within the measurement-based quantum computation (MBQC) model. In particular, we constrain the amount of private information the server can infer, strengthening the security guarantees of the delegation protocol proposed in (Mantri et al., PRX, 2017) \cite{mantri_flow_2017}.
\end{abstract}

\begin{IEEEkeywords}
quantum PAC learning, quantum data, secure delegated computation
\end{IEEEkeywords}


\section*{Introduction} \label{sec:intro}

In recent years machine learning algorithms found impressive applications in several domains, ranging from automated driving, speech recognition, fraud detection and many others. Among the theoretical models proposed for the analysis of such algorithms, the \emph{probably approximately correct} (PAC) learning is undoubtedly the most successful. Introduced in $1984$ by Leslie Valiant in the seminal work ``A theory of the learnable'' \cite{valiant_theory_1984}, this model is characterized by a twofold notion of error: given a target function $f$, the learner is required to output a \emph{hypothesis} function $h$ which is close to $f$ up to a tolerance $\epsilon$, with probability $1- \delta$.

The origins of quantum computation also find roots in the $1980$s, when Richard Feynman and others proposed harnessing quantum physics to create a novel model of computation \cite{benioff_computer_1980, feynman_simulating_1982,preskill_quantum_2021}. Formalising this, we have now substantial theories built around this idea and how information can be transferred and processed in a quantum manner. This leads to \emph{quantum} information theory and \emph{quantum} computation respectively. In the modern day, Feynman's vision is coming ever closer to realisation, in particular with the rapid development of small quantum computers. These devices with $\sim 80$ - $200$ qubits are dubbed noisy intermediate scale quantum (NISQ)~\cite{preskill_quantum_2018} computers. On the quantum communication side, we also have proposals for a `quantum internet', enabled by the secure transfer of quantum information across large distances~\cite{wehner_quantum_2018}. This era poses two interesting challenges. On the one hand, due the low number of qubits and the experimental noise, these devices cannot perform many of the algorithms (or protocols on the communication side) thought to demonstrate exponential speedups over classical algorithms.
Thus, the quest for practical applications gained momentum over the last years, especially in the field of quantum machine learning~\cite{biamonte_quantum_2017, ciliberto_quantum_2018, adcock_advances_2015, benedetti_parameterized_2019, bharti_noisy_2021, arunachalam_guest_2017}.
Broadly, two key motivations drive this field: the potential for quantum algorithms to accelerate classical learning tasks~\cite{harrow2009quantum, kerenidis2017quantum, kerenidis2019q}, and the opportunity to exploit genuinely quantum data using quantum processors~\cite{huang2021power, chen2022exponential, hadiashar2023optimal, anshu2024survey}. The latter perspective is particularly compelling, as it leverages the unique informational properties of quantum systems. In this work, we explore this direction through the lens of quantum information theory. Specifically, we use lossy quantum data compression~\cite{datta_one-shot_2013, wilde_quantum_2013} -- an extension of the classical framework of one-shot rate-distortion theory~\cite{kostina2012fixed} --  as a tool to quantify the power of quantum data in machine learning~\cite{hadiashar2023optimal, caro2024information}.
Finally, we also show how this framework can also be harnessed for the design of privacy-preserving protocols for delegated quantum computation, an increasingly relevant setting as quantum hardware becomes remotely accessible via cloud-based platforms~\cite{larose_overview_2019}.

\begin{figure*}[!htb]
    \centering
    \includegraphics[width=0.7\textwidth]{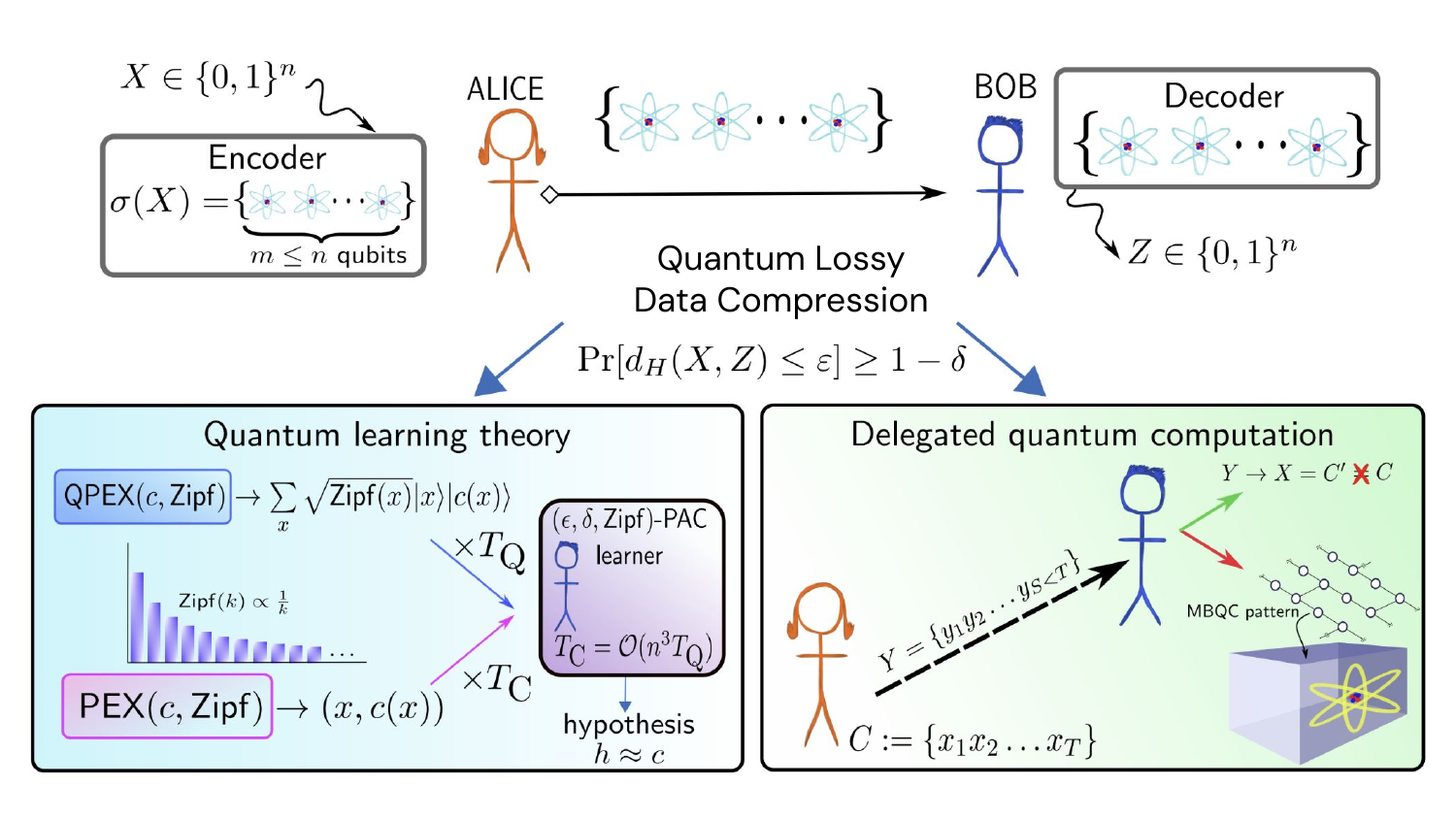}
    \caption{\small\textbf{Quantum lossy data compression and applications.} An overview of the quantum lossy data compression framework and the applications we study in this work. The PAC Nayak bound we derive places bounds on the receiver in a source coding protocol decoding the correct binary string, encoded in less qubits than information-theoretically required to perform a perfect decoding. The first application is in the quantum sample complexity of supervised learning under the $\mathsf{Zipf}$ distribution. Here, PAC source coding can be used to derive a sample complexity for quantum learners which may only be polynomially better than using classical samples. The second is in proving bounds on the ability of an adversary in a delegated quantum computation protocol (in the measurement-based framework of quantum computing (MBQC)) to guess the client's chosen computation.}
    \label{fig:overview_qsc}
\end{figure*}

\subsection{Our contribution} 
In this work, we establish rigorous bounds on the role of data in quantum machine learning, with applications to several fundamental learning problems. Our results also reveal a surprising connection to the security of delegated quantum computation. 

First, we show that the quantum sample complexity and the classical sample complexity of the class of arbitrary functions under Zipf's distribution are equal, up to poly-logarithmic factors. 
Second, we lower bound the required size of quantum data required for approximately learning linear functions, thereby establishing the optimality of previous approaches requiring $\mathcal{O}(n)$ input qubits~\cite{caro2020quantum}.
Finally, we strengthen the security guarantee of the delegation protocol introduced in Ref.~\cite{mantri_flow_2017}, bounding the probability that server reconstruct an approximated version of the target computation.

Our results suggest that lossy data compression provides a powerful framework for unifying concepts from information theory, machine learning, and delegated computation, with the potential to drive further advances in all three areas.

\section{Preliminaries}
\noindent \textbf{Quantum source coding.} The term \emph{source coding} refers to the process of encoding information produced by a given source in a way that it may be later decoded. The initial result of this topic, the \emph{source coding theorem}, was derived in the seminal work of Shannon~\cite{shannon_mathematical_1948}, which describes how many bits are required to encode independent and identically distributed random variables, without loss of information. One way to formalize this scenario is to consider a communication task, between two parties, Alice (A) and Bob (B). One generalisation of this theorem to the quantum world, is via Schumacher's theorem~\cite{schumacher_quantum_1995, jozsa_new_1994} which describes the number of \emph{qubits} required to compress an $n$-fold tensor product of a quantum state, $\rho^{\otimes n}_A$, sent between Alice and Bob. In both cases, this number is directly related to the \emph{entropy} of the random variable, or the quantum state. In the quantum scenario, the relevant quantity is the von Neumann entropy:
\begin{equation}\label{eqn:von_neumann_entropoy}
    S(A) := S(\rho_A) := - \tr\left(\rho_A \log\rho_A \right)
\end{equation}
%
From the entropy, one may define the \emph{mutual information}, given in the quantum case by:
\begin{equation}
    I(A;B) := S(A) + S(B) - S(AB)
\end{equation}
We use the notation $S(AB)$ to mean the entropy of the state, $\rho_{AB}$, shared between Alice and Bob, and $\rho_A = \tr_B(\rho_{AB})$ is the reduced density matrix of this state with entropy, $S(A)$, and likewise for Bob. 

However, all of the above applies to \emph{purely} quantum sources, which is actually slightly too general for our purposes. Instead, we only require the source of Alice to quantumly encode \emph{classical} information. 
The setting is the following. Alice samples an $n$-bit random string, $X \xleftarrow{}  \{0, 1\}^n$, from some distribution, $p:\{0, 1\}^n \rightarrow [0, 1]$. She will then encode $X$ into a quantum state, $\sigma(X) \in \mathbb{C}^{d \times d}$ and send it to Bob. This defines an ensemble, $\Sigma := \{\sigma(X), p(X)\}$. Finally, Bob performs a positive operator value measurement (POVM), $\mathbf{E} := \{E_m\}_m$ on $\sigma$, and attempt to extract the encoded information. In reality, Bob will observe a string, $Z \in \{0, 1\}^n$, and we wish to now bound the probability that he was successful, i.e. that $X=Z$.

First, we explore the minimum number of qubits Alice need to send to Bob to be successful in extracting the correct message? The answer comes from the Holevo's bound~\cite{holevo_bounds_1973} which states that the `\emph{accessible information}' of the ensemble, $\Sigma$, defined as:
\begin{equation} \label{eqn:accessible_information}
    I_{\text{Acc}}(\Sigma) = \max_{\mathbf{E}} I(X; Z)
\end{equation}
is bounded by the `Holevo quantity' of the ensemble, $\chi(\Sigma)$:
\begin{equation} \label{eqn:holevo_bound}
    I_{\text{Acc}}(\Sigma) \leq \chi(\Sigma) \leq \log d
\end{equation}
From the Holevo's bound, we know that at least $n$ qubits are required to send a classical message of length $n$. However, what if we do not have access to $n$ qubits, but instead only $m < n$. Now, we have no chance to output the correct string with certainty, but we can aim to bound the probability of decoding the correct string. In this setting, we have the following result known as `Nayak's bound'~\cite{nayak_optimal_1999}:

\begin{lemma}[Nayak's bound] \label{lem:nayaks_bound}
    If $X$ is a $n$-bit binary string, we send it using $m$ qubits, and decode it via some mechanism back to an $n$-bit string $Z$, then our probability of correct decoding is given by:
    \begin{equation} \label{eqn:nayaks_bound}
        \Pr[X=Z] \leq \frac{2^m}{2^n} = 2^{m-n}.
    \end{equation}
\end{lemma}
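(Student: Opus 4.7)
The plan is to reduce the statement to two elementary facts: an operator inequality for density matrices, and the completeness relation for POVMs. First I would make explicit that the bound is implicitly for $X$ uniformly distributed on $\{0,1\}^n$---otherwise a distribution concentrated on a single string would trivially violate it---and formalise the setting as follows. Alice's encoding is a map $x \mapsto \sigma(x)$ with each $\sigma(x)$ a density operator on $\mathbb{C}^{2^m}$, and Bob's decoder is a POVM $\{E_z\}_{z \in \{0,1\}^n}$ on the same space with $\sum_z E_z = I_{2^m}$. Under these conventions the success probability decomposes as
\begin{equation*}
    \Pr[X = Z] \;=\; \frac{1}{2^n}\sum_{x \in \{0,1\}^n} \tr\bigl(E_x\, \sigma(x)\bigr).
\end{equation*}

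The key step is the operator inequality $\sigma(x) \preceq I_{2^m}$, which is immediate since every density matrix has eigenvalues in $[0,1]$. Combining this with $E_x \succeq 0$, and using cyclicity of the trace,
\begin{equation*}
    \tr\bigl(E_x\, \sigma(x)\bigr) \;=\; \tr\bigl(E_x^{1/2}\, \sigma(x)\, E_x^{1/2}\bigr) \;\le\; \tr\bigl(E_x^{1/2}\, I_{2^m}\, E_x^{1/2}\bigr) \;=\; \tr(E_x).
\end{equation*}
The POVM completeness relation then yields $\sum_{x} \tr(E_x) = \tr(I_{2^m}) = 2^m$, and plugging this back in gives $\Pr[X=Z] \le 2^{-n}\cdot 2^m = 2^{m-n}$.

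There is essentially no obstacle beyond writing these observations down cleanly: the argument bypasses Holevo's inequality and the von Neumann entropy entirely, and relies only on the two structural constraints $\sigma(x)\preceq I$ and $\sum_x E_x = I$. The only step that deserves a one-line justification is the sandwich $E_x^{1/2}\sigma(x)E_x^{1/2} \preceq E_x^{1/2} I_{2^m} E_x^{1/2}$, which follows from the standard fact that $A \preceq B$ and $C \succeq 0$ imply $C^{1/2} A C^{1/2} \preceq C^{1/2} B C^{1/2}$, after which the trace is monotone on positive operators. If one wanted to state the bound for non-uniform $X$, the same computation yields the distribution-free form $\sum_x \tr(E_x \sigma(x)) \le 2^m$, from which \eqref{eqn:nayaks_bound} drops out as the uniform-prior specialisation.
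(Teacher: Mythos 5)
Your proof is correct, but note that the paper does not actually prove this lemma at all: it is stated as an imported result, cited directly to Nayak's paper, and used as a black box. So there is nothing in the source to compare line by line; what you have produced is a self-contained replacement for the citation. Your argument is the standard elementary one and it is sound: model the decoder as a POVM $\{E_z\}_{z\in\{0,1\}^n}$ on $\mathbb{C}^{2^m}$ (any more elaborate decoding mechanism induces such an effective POVM on the input space, so no generality is lost), use $\tr\bigl(E_x\,\sigma(x)\bigr)\le\tr(E_x)$ from $\sigma(x)\preceq I$ and $E_x\succeq 0$, and close with $\sum_x\tr(E_x)=\tr(I_{2^m})=2^m$. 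Two remarks. First, you are right to flag that the lemma as stated in the paper suppresses the hypothesis that $X$ is uniform; without it the bound is false (a point mass on one string gives $\Pr[X=Z]=1$ with $m=0$), and your distribution-free form $\sum_x\tr(E_x\,\sigma(x))\le 2^m$ is the honest general statement. Second, your route deliberately avoids the entropic machinery (Holevo quantity, mutual information) that the paper leans on elsewhere and that underlies the stronger random-access-code versions of Nayak's bound in the cited reference; for the full-string-recovery statement actually asserted here, the trace-counting argument is both simpler and tight, so nothing is lost.
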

%

\noindent\textbf{Quantum lossy data compression.}
For our purposes, it is convenient to consider the relaxed framework of \emph{quantum lossy data compression}~\cite{datta_one-shot_2013, datta_quantum--classical_2013},  which involves a second error parameter, $0\leq \varepsilon <1$. We say that Alice and Bob succeed their task with probability $1-\delta$ up to an approximation error $\epsilon$ if
\[
\Pr[d_{H}(X,Z) \leq \varepsilon] \geq 1- \delta,
\]
where $d_H$ is the Hamming distance between two strings of equal length, which is the number of positions at which the corresponding values are different.

\section*{Results}

In this context, we can prove the following generalization of the Nayak's bound (Lemma~\ref{lem:nayaks_bound}).

\begin{lemma}[PAC Nayak bound]
\label{lem:PACNayak}
If $X$ is an $n$-bit binary string, we send it using $m$ qubits, and decode it via some mechanism
back to an $n$-bit string $Z$ , then our probability of correct decoding up to an error $\beta n \geq 0$ in Hamming distance  is given by
\begin{equation} \label{eqn:pac_nayak_bound}
    \Pr[d_H(X,Z)\leq \beta n] \leq \frac{2^m}{2^{n(1-\beta-H(\beta))}}.
\end{equation}

\end{lemma}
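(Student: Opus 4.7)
The plan is to reduce the PAC bound to the exact Nayak bound of \lemref{lem:nayaks_bound} via a randomized post-processing of the decoder's output. Suppose the given protocol produces $Z$ with $\Pr[d_H(X,Z)\leq \beta n] = p$. I would construct a modified decoder $D'$ that first runs the original decoder to obtain $Z$, then samples a correction vector $e$ independently from a distribution supported on the Hamming ball $B(0,\beta n) = \{v \in \{0,1\}^n : |v| \leq \beta n\}$, and outputs $Z' := Z\oplus e$. Since $D'$ can be recast as a randomized POVM on the same $m$-qubit message (the classical coins for $e$ being absorbed into the measurement operators), Nayak's exact bound applies to $D'$ and gives $\Pr[Z' = X]\leq 2^{m-n}$.

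Next, I would lower bound $\Pr[Z' = X]$ in terms of $p$. On the event $d_H(X,Z)\leq \beta n$, the vector $X\oplus Z$ lies in $B(0,\beta n)$, so $\Pr[Z' = X \mid X,Z] = \Pr[e = X\oplus Z]$. I would choose $e$ via the two-step sample: first pick a uniformly random subset $S\subseteq [n]$ with $|S|=\lceil\beta n\rceil$, then pick a uniformly random pattern on $S$ and extend by zero. This yields a distribution whose support has cardinality $\binom{n}{\lceil\beta n\rceil}\cdot 2^{\lceil\beta n\rceil}$, bounded by $2^{n(\beta + H(\beta))}$ via the first estimate of \lemref{lem:bin_coeff_bounds}. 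Hence $\Pr[e = X\oplus Z] \geq 2^{-n(\beta + H(\beta))}$ whenever the approximate event holds, and averaging gives $\Pr[Z' = X]\geq p\cdot 2^{-n(\beta + H(\beta))}$.

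Combining the two inequalities yields $p \leq 2^{m-n+n\beta + nH(\beta)} = 2^{m}/2^{n(1-\beta - H(\beta))}$, as required. The main subtlety I expect is verifying that the additional classical randomness in $D'$ does not invalidate Nayak's bound: this is automatic once the post-processing is viewed as part of the measurement rule applied to the same $m$-qubit state, so no extra quantum resource is consumed. A sharper analysis using the uniform distribution on $B(0,\beta n)$ together with the second estimate of \lemref{lem:bin_coeff_bounds} would in fact give the tighter conclusion $p\leq 2^{m}/2^{n(1-H(\beta))}$, which a fortiori implies the stated bound.
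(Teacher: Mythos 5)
Your proof is correct, but it takes a genuinely different route from the paper's. The paper argues directly: it rewrites the event $d_H(X,Z)\leq\beta n$ as the existence of $\lceil(1-\beta)n\rceil$ agreeing coordinates, union-bounds over the $\binom{n}{\lfloor\beta n\rfloor}\leq 2^{nH(\beta)}$ choices of such a coordinate set, and applies Nayak's bound to the recovery of each fixed $(1-\beta)n$-bit substring, contributing the factor $2^m/2^{n(1-\beta)}$. You instead run a reduction from approximate to exact decoding: guess the error vector $e$ at random from (a cover of) the Hamming ball and invoke the exact bound of \lemref{lem:nayaks_bound} once on the full $n$-bit string, paying the inverse of the minimum guessing probability, $2^{n(\beta+H(\beta))}$. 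Both routes use \lemref{lem:bin_coeff_bounds} and land on the same exponent, but your reduction has two advantages: it uses Nayak's bound strictly as a black box (the paper's step implicitly requires the variant of Nayak's bound for recovering a designated substring of $X$, which is true but unstated), and, as you observe, sampling $e$ uniformly from the ball together with the second estimate of \lemref{lem:bin_coeff_bounds} yields the strictly stronger bound $2^m/2^{n(1-H(\beta))}$ --- the $2^{n\beta}$ loss in the stated lemma is an artifact of over-counting error patterns. Your worry about the classical coins is easily dispatched even without absorbing them into the POVM: for each fixed $e$ the map $Z\mapsto Z\oplus e$ is a deterministic post-processing of a valid decoder, so $\Pr[Z\oplus e=X]\leq 2^{m-n}$ pointwise, and averaging over $e$ preserves the bound. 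The only blemishes are integer-rounding ones ($2^{\lceil\beta n\rceil}$ versus $2^{\beta n}$, and $H(\lceil\beta n\rceil/n)$ versus $H(\beta)$); these cost at most constant factors and occur at the same level of informality in the paper's own proof.
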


\begin{proof}[Proof of Lemma~\ref{lem:PACNayak}]
The proof is straightforward, we begin by explicitly expanding the left hand side of~\eqref{eqn:pac_nayak_bound}:
\begin{align*}
 &\Pr[d_H(X,Z)\leq \beta n] \\= &\Pr[\exists i_1\dots i_{ \lceil (1-\beta)n \rceil}: X_{i_1}\dots X_{i_{ \lceil (1-\beta)n \rceil }} = Z_{i_1}\dots Z_{i_{\lceil (1-\beta)n\rceil }}] \\ &\leq  \binom{n}{\lfloor \beta n \rfloor} \Pr[ X_{i_1}\dots X_{i_{\lceil (1-\beta) n\rceil }} = Z_{i_1}\dots Z_{i_{\lceil (1-\beta) n \rceil }}] \label{eqn:pac_nayak_proof_2}\\
 &\leq 2^{H(\beta) n}\frac{2^m}{2^{n(1-\beta)}} = \frac{2^m}{2^{n(1-\beta-H(\beta))}},
\end{align*}
We use the union bound for the second inequality, and \eqref{eqn:binomial_coefficient_bound}.

\end{proof}
Now, let us move from a success probability to a sample complexity result. In a PAC learning scenario, it is natural to assume that a classical message is transmitted through $m$ identical copies of a quantum state. This hypothesis stems from the definition of quantum sample reviewed in \eqref{eq:sample}. Now, a natural question is to ask how many copies of such states are required in order to learn $X$ up to an error $\epsilon$ in Hamming distance with probability $1-\delta$. We can derive the following lower bound from Holevo's bound.

\begin{lemma}[Learning a string with quantum data]
\label{lem:PACHolevo}
Let $\epsilon \in [0,1/2)$.
Assume $X$ is an $n$-bit binary string sampled with probability $p$, we send it using $m$ copies of a quantum state $\rho \in \mathbb C^{2^\ell \times 2^\ell}$ and decode it via some mechanism back to an $n$-bit string $Z$. Let $d_H(X,Z) \leq \epsilon n$ with probability $1-\delta$. Then,
\[
m \geq \frac{(1-\delta)(1-H(\epsilon))n - H(\delta)}{\ell}
\]
\end{lemma}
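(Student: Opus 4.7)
The plan is to feed the two lemmas already set up in the preliminaries, namely \lemref{lem:qpsc} (the information-theoretic inequality relating accessible information to Hamming-close decoding) and Holevo's bound (\eqref{eqn:holevo_bound}), into each other in a mechanical way. The one decision to make is whether to apply Holevo directly to the $m$-fold tensor encoding, or to bound the per-copy accessible information first and then invoke \lemref{lem:copies}.

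First, I would regard the whole encoding map as $X \mapsto \rho(X)^{\otimes m}$, an ensemble $\Sigma$ supported on $\bigl(\mathbb{C}^{2^\ell}\bigr)^{\otimes m} = \mathbb{C}^{2^{m\ell}}$. Holevo's bound then gives
\[
I_{\mathrm{Acc}}(\Sigma) \;\leq\; \log\bigl(2^{m\ell}\bigr) \;=\; m\ell.
\]
(Equivalently, one can note that a single copy has $I_{\mathrm{Acc}} \leq \ell$ by Holevo and then apply \lemref{lem:copies} to pay a factor of $m$; the two routes give the same constant.)

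Next, I would invoke \lemref{lem:qpsc} with the hypothesis $\Pr[d_H(X,Z) \leq \epsilon n] \geq 1-\delta$ and with $\alpha = m\ell$. The lemma applies because $\epsilon \in [0,1/2)$ is exactly its range of validity. It yields
\[
(1-\delta)\bigl(1 - H(\epsilon)\bigr)n - H(\delta) \;\leq\; m\ell.
\]
Dividing through by $\ell$ gives the claimed lower bound on $m$.

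I do not expect a serious obstacle: the bulk of the technical work has been absorbed into \lemref{lem:qpsc}, and the only structural observation is that the operative dimension is $2^{m\ell}$ rather than $2^\ell$. If anything, the only thing worth being careful about is confirming that applying Holevo to $\Sigma = \{\rho(X)^{\otimes m}, p(X)\}$ is legitimate (it is, since the prior on $X$ is unchanged and only the encoding map has been redefined), so that no additional factor from \lemref{lem:copies} is double-counted.
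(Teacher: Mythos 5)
Your proof is correct and follows essentially the same route as the paper's: bound the accessible information of the $m$-copy ensemble by $m\ell$ and feed $\alpha = m\ell$ into \lemref{lem:qpsc}. The only (immaterial) difference is that the paper obtains $I_{\text{Acc}} \leq m\ell$ via \lemref{lem:copies} combined with the single-copy Holevo bound, whereas your primary route applies Holevo directly to the $2^{m\ell}$-dimensional tensor-product ensemble --- an equivalent step you yourself note.
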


\begin{proof}
By Lemma~\ref{lem:copies}, we have that $I_{\text{Acc}}(\Sigma = \{\rho^{\otimes m}, p\}) \leq m I_{\text{Acc}}(\Sigma' = \{\rho, p\})$.
Moreover, we also have that $I_{\text{Acc}}(\Sigma' = \{\rho, p\})  \leq \ell$ by~\eqref{eqn:holevo_bound}.
By applying Lemma~\ref{lem:qpsc} and taking $\alpha:= m \ell$, we have that

\begin{align}
    (1 - \delta)(1 - H(\epsilon))n - H(\delta) &\leq  m \ell.\\
    \implies m &\geq \frac{(1 - \delta)(1 - H(\epsilon))n - H(\delta)}{\ell}
\end{align}
which completes the proof.
\end{proof}

Using the bounds above, we demonstrate two use cases in apparently distinct application areas, namely quantum learning theory and secure delated quantum computation. 

\subsection{Learning with quantum data under the Zipf's law}
\label{sub:qlt}
Before presenting our result, let us first briefly formalise our discussions of quantum learning theory. 
One of the most popular formalisms for a theory of learning is Probably Approximately Correct (PAC) learning. In PAC learning, one considers a \emph{concept class}, usually Boolean functions, $\mathcal{C} \subseteq \mathcal{F}_n := \{ f | f : \{0,1\}^n \xrightarrow{} \{0,1\} \}$. For a given concept (given Boolean function to be learned), $c\in \mathcal{C}$, the goal of a PAC learner is to output a `hypothesis', $h$, which is `probably approximately' correct. In other words, the learner can output a hypothesis which agrees with the output of chosen function almost always. In order to be a `PAC learner for a concept class', the learner must be able to do this \emph{for all} $c \in \mathcal{C}$. In this process, learners are given access to an oracle\footnote{Formally, this is a \emph{random example oracle} for the concept class under the distribution, $O = \mathsf{PEX}(c, D)$. The type of oracle the learner has access to may make the learning problem more or less difficult.}, $O$, which outputs a sample from a distribution, $x \sim D$, along with the corresponding concept evaluation (the label) at that datapoint, $c(x)$. It is also important to distinguish between distribution-dependent learning, where $D$ is fixed, and distribution-free learning, where $D$ is arbitrary.

Now, we can formally define a PAC learner. Given a target concept class $\mathcal{C}$, a distribution $D$ and $\epsilon, \delta \in [0,1]$, for any $c \in \mathcal{C}$ a $(\epsilon, \delta, D)$-PAC learner outputs a hypothesis $h$ such that
\begin{equation}
\label{eq:pac}
    \ell(c,h):= \Pr_{x\sim D}[c(x)\neq h(x)] \leq \epsilon
\end{equation}
with probability at least $1-\delta$.

In \emph{quantum} PAC learning, we consider an alternative oracle $O'$\footnote{A \emph{quantum} random example oracle, $O' = \mathsf{QPEX}(c, D)$.}, as defined in~\cite{bshouty_learning_1995}, which outputs a quantum state that encodes both the concept class and the distribution, that is
\begin{equation}
\label{eq:sample}
    \sum_{x\in \{0,1\}^n} {\sqrt{D(x)}}\ket{x}\ket{c(x)}.
\end{equation}

A key question is whether quantum learners may be able to learn concepts with a lower sample complexity than is possible classically. Early results in this direction were both positive and negative, with the distribution from which the examples are sampled being a crucial ingredient. For example, it was shown in~\cite{bshouty_learning_1995} that exponential advantages for PAC learners were possible under the \emph{uniform} distribution. On the other hand, in the distribution-free case, there is only a marginal improvement that quantum samples can hope to provide~\cite{arunachalam_optimal_2018}. For our purposes, we analyse the sample complexity of quantum PAC learners, and use our tools to derive a lower bound on the learning problem, relative to another specific distribution (not simply the uniform one). The distribution in question is the \emph{Zipf} distribution (which is a long-tailed distribution relevant in many practical scenarios, see~\cite{zhu_capturing_2014, zhang_understanding_2017, wang_learning_2017} for example) over $\{0, 1\}^n$, defined as follows:
\begin{equation} \label{eqn:zipf_distribution}
    \mathsf{Zipf}(k) := \frac{1}{k H_N},~ H_K := \sum\limits_{k=1}^K \frac{1}{k}\in[\log k, \log k +1]
\end{equation}

In the following, we allow the target function to be completely arbitrary, i.e. we study the learnability of the concept class $\mathcal{F}_n$. Though this framework may look too simplistic, it captures prediction problems in which the domain  doesn't have any underlying structure. Examples of such structure could be a notion of distance on the domain. It has been thoroughly studied in \cite{feldman}, especially in relation to long-tailed distributions.

By applying~Lemma~\ref{lem:PACHolevo}, we can prove the following lower bound on the sample complexity of a quantum PAC learner for $\mathcal{F}_n$ relative to the Zipf distribution. 

\begin{theorem}
\label{theorem:zipf}
Let $N=2^n$ and $\beta \in [0,1/2]$.
For every $\epsilon \in [0,\frac{\beta}{n+1})$ and $\delta \in [0,1]$, every $(\epsilon,\delta, \mathsf{Zipf})$-PAC  quantum learner for $\mathcal{F}_n$ has sample complexity:
\begin{equation}
    \Omega\left(\frac{(1-\delta)(1-H(\beta))N - H(\delta)}{n}\right).
\end{equation}
\end{theorem}

\begin{proof}
Firstly, let $h$ be the hypothesis produced by a quantum learner upon receiving $m$ quantum samples as in \eqref{eq:sample}, such that $\ell(f,h)\leq \epsilon$ with probability at least $1-\delta$.

In this proof, we represent the target function and the hypothesis in $N$-bit strings: $X=f(0)f(1)\dots f(N)$ and $Z=h(0)h(1)\dots h(N)$.
Let $I$ be the subset of the indices where $X$ and $Z$ differ, i.e. $I:=\{i\in [N]:f(i)\neq h(i)\}$.
The condition $\ell(f,h)\leq \epsilon$ can be restated as:
\[
\mathsf{Zipf}(I):= \sum_{i\in I}\mathsf{Zipf}(i) \leq \epsilon < \frac{\beta}{n+1}.
\]

Now we lower bound the quantity above.
\[
\sum_{i\in I}\mathsf{Zipf}(i) \geq |I| \min_{i \in [N]}\mathsf{Zipf}(i)\geq \frac{|I|}{N(n+1)}.
\]
Combining these two inequalities, we get that $|I|<\beta N < \frac{N}{2}$. Furthermore, we can observe that that $d_H(X,Z) = |I| < \beta N$.

Finally, by~Lemma~\ref{lem:PACHolevo}, since quantum samples exist in $n$-qubit states, the sample complexity of  quantum learner can be lower bounded as follows:
\[
    m \geq \frac{(1-\delta)(1-H(\beta))N - H(\delta)}{n},
\]
which completes the proof.
\end{proof}


In particular, this result implies that we need at least ${O}(N/n)$ quantum samples to learn an arbitrary function under the \textsf{Zipf} distribution with approximation error $\epsilon = {O}(1/n)$.

Finally, for any distribution $D$ over $\{0,1\}^n$, we exhibit a $(\epsilon,\delta, D)$-PAC classical learner for $\mathcal{F}_n$ with sample complexity $O\left(\frac{N}{\epsilon}\log\frac{N}{\delta}\right)$. 
Informally, such learner memorizes the labels of the training examples, and assigns a random label to the remaining instances.

\begin{theorem}
\label{theorem:z1}
Let $N=2^n$ and let $D$ be a distribution over $[N]$.
There exists a classical $(\epsilon, \delta, D)$-PAC learner with sample complexity $O\left(\frac{N}{\epsilon}\log\frac{N}{\delta}\right)$. 
\end{theorem}

\begin{proof}
 We partition $X= [N]$ in two sets:
\begin{itemize}
    \item $X_1 := \{i \in [N]: D(i) > \epsilon/N\}$.
    \item $X_2 := \{i \in [N]: D(i) \leq \epsilon/N\}$.
\end{itemize}

Observe that if $\forall x \in X_1 : h(x) = f(x) $, then the generalization error is at most $\epsilon$.

Fix $m := { \frac{N}{\epsilon}\log\frac{N}{\delta}}$.
Let $S = \{({i_1},f({i_1})), \ldots, ({i_m},f({i_m})) $ be the examples.

The algorithm $\mathcal{A}$ works as follows:
\begin{enumerate}
    \item For each point $i$ in the sample $S$, memorize its label $f(i)$.
    \item Assign a random label to the points that aren't in the sample $S$.
\end{enumerate}

It suffices to prove that, with probability $1 - \delta$, the sample covers all the points in $X_1$, i.e. $X_1\subseteq \{{i_1},\ldots, {i_m}\}$. 
Consider the probability of not obtaining a point $i \in X_1$ after $m$ queries, which is given by:
\[
(1- D(i))^m \leq \left( 1- \frac{\epsilon}{N} \right)^{ \frac{N}{\epsilon}\log\frac{N}{\delta}} < e^{\log\frac{\delta}{N}} = \frac{\delta}{N},
\]
where we employed the inequality $(1-\frac{1}{y})^y \leq \frac{1}{e}$ that is true for all $y \geq 1$.
By the union bound, the probability that a point, $x$, in $X_1$ does not appear in the sample is given by
\begin{align*}
 &\Pr[\exists x \in X_1 : x_1 \not\in \{{i_1},\ldots, {i_m}\}] \\\leq &\sum_{x \in X_1} \Pr[x \not\in \{{i_1},\ldots, {i_m}\}] \leq N \frac{\delta}{N} = \delta.   
\end{align*}

Thus,  with probability at least $1-\delta$ every point in $X_1$ is contained the sample.
\end{proof}

The former result implies that we can learn an arbitrary function under an arbitrary distribution with approximation error $\epsilon = {O}(1/n)$ using at most ${O}(N n^2)$ classical samples.

Putting these together, we can observe that if we receive $T_{\text{Q}}$ quantum samples as in \eqref{eq:sample} according to the $\mathsf{Zipf}$ distribution, we can achieve the same learning error with $T_{\text{C}} = n^3T_{\text{Q}}$ classical samples, i.e. the sample complexity decreases by at most of a factor ${{O}}(n^3)$.

\subsection{Learning linear functions with quantum data}
Our next result concerns \emph{linear} functions, also known as parity functions, which have been extensively studied in learning theory and have found several applications in cryptography. 
A major result in quantum learning theory establishes an unbounded separation for learning linear functions with quantum examples compared to classical examples: a single quantum example suffice to learn linear function, compared to  $\Theta(n)$ classical examples~\cite{arunachalam_guest_2017, caro2020quantum}.
We observe that a quantum examples consists in $n+1$ qubits. A natural question is whether there exists other kind of quantum resources beyond quantum examples that allow to further reduce this qubit count. 
Here, we provide a negative answer, demonstrating that $\Omega(n)$ qubits are necessary to $(\epsilon,\delta)$-PAC learn linear functions.

In order to state this result, we first prove the following Lemma.
{
\begin{lemma}[Approximate Recovery of Linear Functions]
Let $\eta > 0$ be a constant. Consider a distribution $\mathcal{D}$ over $\{0,1\}^n$ such that for all $y,z \in \{0,1\}^n$ satisfying $d_H(y,z)=1$, it holds that
\[
    \Pr_{x\sim \mathcal{D}}[x=y] \leq \eta \cdot \Pr_{x\sim \mathcal{D}}[x=z].
\]
Let $z \in \{0,1\}^n$ be unknown and define the linear function $c_z(x) = x \cdot z \mod 2$. Suppose $h : \{0,1\}^n \rightarrow \{0,1\}$ satisfies
\[
    \Pr_{x\sim \mathcal{D}} [h(x) \neq c_z(x)] \leq \epsilon.
\]
Then there exists a classical algorithm that, using $N$ samples from $\mathcal{D}$, outputs a string $y \in \{0,1\}^n$ such that with high probability
\[
    d_H(y,z) \leq (1+\eta)\epsilon n + O\left(\sqrt{\frac{n \log(n/\delta)}{N}}\right).
\]
In particular, for $N = O\left(\epsilon^{-2} \log(n/\delta)\right)$, we have with probability at least $1 - \delta$ that
\[
    d_H(y,z) \leq 2(1+\eta)\epsilon n.
\]
\end{lemma}

\begin{proof}
For each $i \in [n]$, let $e_i \in \{0,1\}^n$ be the vector with a 1 in coordinate $i$ and zeros elsewhere. Then for any $x \in \{0,1\}^n$,
\begin{align*}
    c_z(x) \oplus c_z(x \oplus e_i) 
    &= (x \cdot z \mod 2) \oplus ((x \oplus e_i) \cdot z \mod 2) \\
    &= (x \cdot z) \oplus ((x \cdot z) \oplus z_i) \mod 2 \\
    &= z_i.
\end{align*}
Define the estimator:
\[
    \tilde{z}_i(x) := h(x) \oplus h(x \oplus e_i).
\]
This is intended to approximate $z_i$. Now we bound the error of $\tilde{z}_i(x)$:
\begin{align}
        &\Pr_{x \sim \mathcal{D}}[\tilde{z}_i(x) \neq z_i] 
        \\\leq &\Pr_{x \sim \mathcal{D}}[h(x) \neq c_z(x)] + \Pr_{x \sim \mathcal{D}}[h(x \oplus e_i) \neq c_z(x \oplus e_i)].
\end{align}

We already have
\[
    \Pr_{x \sim \mathcal{D}}[h(x) \neq c_z(x)] \leq \epsilon.
\]
To bound the second term, note:
\begin{align*}
    &\Pr_{x \sim \mathcal{D}}[h(x \oplus e_i) \neq c_z(x \oplus e_i)] 
    = \sum_{x \in \{0,1\}^n} \Pr[x] \cdot \mathbf{1}_{h(x \oplus e_i) \neq c_z(x \oplus e_i)} \\
    = &\sum_{y \in \{0,1\}^n} \Pr[x = y \oplus e_i] \cdot \mathbf{1}_{h(y) \neq c_z(y)} \\
    \leq &\sum_{y \in \{0,1\}^n} \eta \cdot \Pr[x = y] \cdot \mathbf{1}_{h(y) \neq c_z(y)} 
    = \eta \cdot \Pr_{x \sim \mathcal{D}}[h(x) \neq c_z(x)] 
    \leq \eta \epsilon.
\end{align*}
So,
\[
    \Pr_{x \sim \mathcal{D}}[\tilde{z}_i(x) \neq z_i] \leq (1+\eta)\epsilon.
\]

Now take $N$ independent samples $x^{(1)}, \ldots, x^{(N)} \sim \mathcal{D}$ and define
\[
    \widehat{\mu}_i := \frac{1}{N} \sum_{j=1}^N \tilde{z}_i(x^{(j)}).
\]
Then by Hoeffding’s inequality, for any $t > 0$,
\[
    \Pr\left[\left| \widehat{\mu}_i - \mathbb{E}[\tilde{z}_i(x)] \right| > t\right] \leq 2\exp(-2t^2 N).
\]
By a union bound over all $i \in [n]$,
\[
    \Pr\left[\exists i \in [n] : \left| \widehat{\mu}_i - \mathbb{E}[\tilde{z}_i(x)] \right| > t\right] \leq 2n \exp(-2t^2 N).
\]

Now observe that
\[
    |\mathbb{E}[\tilde{z}_i(x)] - z_i| = \left| \Pr[\tilde{z}_i = 1] - z_i \right| \leq (1+\eta)\epsilon,
\]
since $z_i \in \{0,1\}$. Therefore, rounding $\widehat{\mu}_i$ to the nearest bit recovers $z_i$ provided the empirical mean is within $(1/2 - (1+\eta)\epsilon)$ of the true value.

To ensure this, set $t = 1/4$ and choose $N = O(\log(n/\delta))$. With probability at least $1 - \delta$, all $\widehat{\mu}_i$ are within $t$ of $\mathbb{E}[\tilde{z}_i(x)]$, and so the rounding procedure:
\[
    y_i := \begin{cases}
        1 & \text{if } \widehat{\mu}_i > 1/2, \\
        0 & \text{otherwise},
    \end{cases}
\]
satisfies
\[
    d_H(y,z) \leq (1+\eta)\epsilon n + O\left(\sqrt{\frac{n \log(n/\delta)}{N}}\right)
\]
by a Chernoff bound.
\end{proof}

}

 We now prove the main result of this section.
{
\begin{theorem}
Let $\{\rho_x : x\in\{0,1\}^n\}$ be an $m$‑qubit encoding of the linear function $c_x:\{0,1\}^n\to\{0,1\}$, and suppose there is a (possibly randomized) decoding procedure which, on input $\rho_x$, outputs an $n$‑bit string $Z$ satisfying
\[
  \Pr\bigl[d_H(x,Z)\le\beta n\bigr]\;\ge\;1-\delta
  \quad\text{for some }\beta<\tfrac12,\ \delta<1.
\]
Then 
\[
  m \;\ge\; n\bigl(1-\beta-H(\beta)\bigr)\;-\;\log\frac1{1-\delta}
  \;=\;\Omega(n).
\]
In particular, for any fixed success‐probability $1-\delta=\Omega(1)$ and error‐rate $\beta<1/2$, one has
\[
  m \;\ge\; \bigl(1-\beta-H(\beta)\bigr)\,n \;-\; O(1)
  = \Omega(n).
\]
\end{theorem}

\begin{proof}
By hypothesis, the decoding succeeds with
\[
  \Pr_{x,Z}\bigl[d_H(x,Z)\le\beta n\bigr]\;\ge\;1-\delta.
\]
But Lemma~\ref{lem:PACNayak} (the PAC Nayak's bound) says that for \emph{any}
$m$‑qubit encoding plus decoding,
\[
  \Pr_{x,Z}\bigl[d_H(x,Z)\le\beta n\bigr]
  \;\le\;
  \frac{2^m}{2^{\,n\bigl(1-\beta-H(\beta)\bigr)}}.
\]
Combine the two:
\[
  1-\delta\;\le\;\frac{2^m}{2^{\,n\bigl(1-\beta-H(\beta)\bigr)}}
  \;\Longrightarrow\;
  2^m\;\ge\;(1-\delta)\;2^{\,n\bigl(1-\beta-H(\beta)\bigr)}.
\]
Taking binary logarithms,
\begin{align*}
      m &\;\ge\; n\bigl(1-\beta-H(\beta)\bigr)\;+\;\log(1-\delta)
  \\&\;=\; n\bigl(1-\beta-H(\beta)\bigr)\;-\;\log\!\tfrac1{1-\delta}.
\end{align*}

Since $\beta<1/2$ implies $1-\beta-H(\beta)>0$, this is $\Omega(n)$ as claimed.
\end{proof}
}
Together with Ref.~\cite{caro2020quantum}, the results provided in this section imply that $\Theta(n)$ qubits are necessary and sufficient to PAC learn linear functions from quantum data.
\subsection{Classically-driven delegated quantum computation}
\label{sub:flow}
In the previous sections, we demonstrated an application of our bounds in learning theory. Interestingly, the same mathematical tools can be applied in a completely different context, namely in the delegation of (quantum) computation. In this scenario, Alice takes the role of a `client', who wishes to perform a quantum computation. However, she does not have the resources to to implement the computation herself, and therefore must delegate the problem to be solved to a `server', Bob. However, for any number of reasons, the client may not trust the server, if even only to preserve their own privacy. This situation is extremely relevant in the current NISQ era, in which quantum computers do exist solely in the `quantum cloud'~\cite{larose_overview_2019} and will likely also be the norm for the near and far future, even perhaps when fully fault tolerant quantum computers are available. Delegated quantum computation has arisen to deal with this possibility and ensure computations are performed correctly and securely, with blind quantum computation being one of the primary techniques~\cite{childs_secure_2005, broadbent_universal_2009, fitzsimons_unconditionally_2017}.

However, typically blind quantum computation requires Alice to have some minimal quantum resource, i.e. the ability to prepare single qubit states and send them to Bob. This presents its own issues however several efforts have been made to remove the quantum requirements from Alice~\cite{gheorghiu_computationally-secure_2019, cojocaru_qfactory_2019}. Doing so typically reduces the security of such delegation protocols from information-theoretic to simply computational in nature, based on primitives from post-quantum cryptography. Furthermore, based on assumptions in complexity theory, it is likely that information-theoretic security in this scenario is impossible without such quantum communication~\cite{aaronson_complexity-theoretic_2019}. Nevertheless, it still may not be practical to demand quantum-enabled clients, and so in many cases classical communication to the server may be the best we can do.

In particular, we have in mind a protocol proposed by~\cite{mantri_flow_2017}, which enables the delegation (by Alice) of a quantum computation to a remote server (Bob), using only classical resources (called \emph{classically driven blind quantum computing} (CDBQC)). We remark that here the notion of \emph{blindness} is weaker than in~\cite{childs_secure_2005, broadbent_universal_2009, fitzsimons_unconditionally_2017}. The CDBQC protocol is written in the language of \emph{measurement-based} quantum computation (MBQC). Instead of using the quantum circuit model in which computation is driven by the operation of unitary evolutions on qubits, MBQC starts from a large entangled state (represented as a graph) and is driven by measurement of the qubits in the graph. In order to drive a blind computation, Alice needs to specify a set of measurement angles which describe the basis in which each qubit is measured. 

Now, in the protocol of~\cite{mantri_flow_2017}, she sends Bob $2n$ bits in order to convey this information. However, it is shown in~\cite{mantri_flow_2017} how $T:=3.388n$ bits are required to be send for Bob to have a complete specification of the MBQC computation. These extra bits are required to convey also the `flow information' which is, roughly speaking, the manner in which the desired computation flows through the graph. The work of~\cite{mantri_flow_2017} essentially hides this flow information from the server, so Bob does not know which computation is actually desired by Alice. 
By applying Nayak's bound (Lemma~\ref{lem:nayaks_bound}), the analysis of~\cite{mantri_flow_2017} shows that the probability that the server decodes the entire computation is at most $2^{-1.388n} \simeq 2^{-0.41 T}$.


Using the tools derived above, we can further refine the claim of~\cite{mantri_flow_2017}. In particular, it is natural to ask whether the server can even guess a significant portion of the computation (rather than the \emph{entire} computation as~\cite{mantri_flow_2017}). In many realistic scenarios, an adversary leaking \emph{any} part of a computation can be a serious threat to the client's privacy.

Now, by applying~Lemma~\ref{lem:qpsc} and~Lemma~\ref{lem:PACNayak} we can derive the following result:

\begin{theorem}
\label{theorem:flow}
Let $\mathcal{C}$ be a family of computations, such that each $C \in \mathcal{C}$ can be encoded in a string of at least $T$ bits. Assume that a client delegates $C \in \mathcal{C}$ to a server using at most $0.591T$ bits of communication. Then the following bounds hold:
\begin{enumerate}
    \item For every $\epsilon \in [0,1]$, the server can guess $C$ up to an error $\epsilon T$ in Hamming distance with probability at most $2^{(-0.41+\epsilon+H(\epsilon))T}$. This gives an exponentially small probability for every $\epsilon < 0.06 $.
    \item Assume that server outputs $C'$ such that $d_H(C,C') \leq  \epsilon T$ with probability $1$. Then $\epsilon > 0.08 $.
\end{enumerate}
\end{theorem}

\begin{proof}
The first claim follows from a direct application of the PAC Nayak bound (Lemma~\ref{lem:PACNayak}). The server receives an input a string $Y$ of less than $0.591T$ bits, and she wants to decode a $T$-bit string $X$, with an approximation error of at most $\epsilon T$ in Hamming distance. Hence, let $Z$ be a $T$-bit string output by the server. The success probability of a correct decoding is then given by:
\[
 P[d_{H}(X,Z)\leq \epsilon T] \leq \frac{2^{0.591T}}{2^{T[1-\epsilon-H(\epsilon)]}}\leq 2^{[-0.41 + \epsilon +H(\epsilon)]T}. 
\]
The second claim in Theorem~\ref{theorem:flow} follows from an application of Lemma~\ref{lem:qpsc}, choosing $\delta = 0$ and combining with~\eqref{eqn:holevo_bound}. The accessible information can be upper bounded by $0.591T$, and thus we have:
\begin{align*}
    (1-H(\epsilon))T \leq 0.591T \qquad
    \implies \epsilon > 0.08.
\end{align*}
 
\end{proof}


\bibliographystyle{IEEEtran}
\bibliography{lib}

\appendix

\renewcommand{\thesubsection}{\Alph{subsection}}

We introduce three useful lemmata which we employ in our results. 
\begin{lemma}[Binomial coefficient bounds]\label{lem:bin_coeff_bounds}
The following two bounds on the binomial coefficients hold true:
\begin{equation} \label{eqn:binomial_coefficient_bound}
    \binom{n}{k} \leq 2^{nH\left(\frac{k}{n}\right)}, \qquad \sum\limits_{i=0}^m \binom{n}{i} \leq 2^{nH\left(\frac{m}{n}\right)} ~ \forall m\leq \frac{n}{2},
\end{equation}
where, for $0\leq p \leq 1$, $H(p)$ is the entropy of a binary random variable that takes value $1$ with probability $p$ and value $0$ with probability $1-p$, that is $H(p):= -p\log(p) - (1-p)\log(1-p)$. 
\end{lemma}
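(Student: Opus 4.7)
The plan is to prove both bounds by exploiting the binomial expansion of $(p + (1-p))^n = 1$ with the well-chosen choice $p = k/n$ (resp.\ $p = m/n$), and then isolating either a single term or a partial sum.

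For the first inequality, I would set $p := k/n$ and write
\[
    1 = (p + (1-p))^n = \sum_{i=0}^n \binom{n}{i} p^i (1-p)^{n-i}.
\]
Since every term on the right is non-negative, the single term with $i = k$ is already at most $1$, giving
\[
    \binom{n}{k} p^k (1-p)^{n-k} \leq 1,
\]
which rearranges to $\binom{n}{k} \leq p^{-k}(1-p)^{-(n-k)}$. Taking logarithms and recognizing $-\tfrac{k}{n}\log\tfrac{k}{n} - \tfrac{n-k}{n}\log\tfrac{n-k}{n} = H(k/n)$ gives the desired bound $\binom{n}{k} \leq 2^{nH(k/n)}$.

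For the second inequality, I would again use the same expansion with $p := m/n$, but now sum only the tail terms $i = 0, \ldots, m$. The key observation, which is where the hypothesis $m \leq n/2$ enters, is that $p/(1-p) \leq 1$, so that for every $i \leq m$ the factor $(p/(1-p))^i$ is at least $(p/(1-p))^m$. Rewriting $\binom{n}{i} p^i (1-p)^{n-i} = \binom{n}{i}(1-p)^n (p/(1-p))^i$ and applying this monotonicity yields
\[
    1 \geq \sum_{i=0}^m \binom{n}{i} p^i(1-p)^{n-i} \geq p^m (1-p)^{n-m} \sum_{i=0}^m \binom{n}{i}.
\]
Rearranging and taking logarithms as before gives $\sum_{i=0}^m \binom{n}{i} \leq 2^{nH(m/n)}$.

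There is no real obstacle here: both inequalities are standard entropy-type bounds on binomial coefficients, and the only subtlety is remembering to use the $m \leq n/2$ hypothesis in the second case to ensure $p/(1-p) \leq 1$, which is exactly what lets us lower bound each term of the truncated sum by its value at $i = m$. The edge cases $k \in \{0, n\}$ (resp.\ $m = 0$) should be handled by the usual convention $0 \log 0 = 0$ so that $H$ vanishes at the endpoints and the stated inequalities remain trivially true.
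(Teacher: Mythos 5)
Your proof is correct: both inequalities follow cleanly from the expansion of $1 = (p+(1-p))^n$ with $p = k/n$ (resp. $p = m/n$), and you correctly identify that the hypothesis $m \leq n/2$ is needed precisely so that $p/(1-p) \leq 1$, which lets you lower-bound each term of the truncated sum by the $i=m$ term. The paper itself gives no proof of this lemma --- it is stated as a standard fact and used as a black box --- so there is nothing to compare against; your argument is the classical entropy bound on binomial coefficients and is complete, including the endpoint convention $0\log 0 = 0$.
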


Next, our second relevant lemma is an adaption of Theorem $15$ from~\cite{arunachalam_optimal_2018}. 

\begin{lemma}
\label{lem:qpsc}
For $\alpha \geq 0$, $ \epsilon \in [0,1/2)$, $\delta \in [0,1]$, assume that $I_{\text{Acc}}(\Sigma)\leq \alpha$ and $d_H(X,Z) \leq \epsilon n$ with probability at least $1-\delta$. Then,
\[
(1-\delta)[1-H(\epsilon)]n -{H(\delta)} \leq \alpha.
\]
\end{lemma}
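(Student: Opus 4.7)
The plan is to derive a lower bound on the mutual information $I(X;Z)$ and then invoke the Holevo-style inequality $I(X;Z)\leq I_{\mathrm{Acc}}(\Sigma)\leq\alpha$, which holds for the specific POVM that produces the classical outcome $Z$ from the encoding $\sigma(X)$. The lower bound on $I(X;Z)$ will come from a Fano-type argument that combines the combinatorial content of \lemref{lem:bin_coeff_bounds} with standard entropy inequalities.

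Concretely, I would introduce the binary indicator $B:=\mathbf{1}[d_H(X,Z)\leq\epsilon n]$, so that $\Pr[B=1]\geq 1-\delta$, and decompose
\[
H(X\mid Z)\;\leq\;H(X,B\mid Z)\;=\;H(B\mid Z)+H(X\mid Z,B).
\]
For the first term, conditioning does not increase entropy, and monotonicity of the binary entropy on $[0,1/2]$ gives $H(B\mid Z)\leq H(B)\leq H(\delta)$ (the complementary regime $\delta>1/2$ only weakens the target inequality and can be handled separately). For the second term, I would split on the value of $B$: when $B=1$, the string $X$ lies in the Hamming ball of radius $\epsilon n$ around $Z$, which contains at most $2^{nH(\epsilon)}$ points by the second inequality of \lemref{lem:bin_coeff_bounds} (applicable because $\epsilon<1/2$), so $H(X\mid Z,B=1)\leq nH(\epsilon)$; when $B=0$, the trivial bound $H(X\mid Z,B=0)\leq n$ suffices. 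Weighting by the respective probabilities yields
\[
H(X\mid Z)\;\leq\;H(\delta)+(1-\delta)\,nH(\epsilon)+\delta n.
\]

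Combining with the standing assumption that $X$ is uniform on $\{0,1\}^n$ (so $H(X)=n$), a short rearrangement of $I(X;Z)=H(X)-H(X\mid Z)$ gives
\[
I(X;Z)\;\geq\;(1-\delta)\bigl[1-H(\epsilon)\bigr]n-H(\delta),
\]
and the lemma follows at once from $I(X;Z)\leq I_{\mathrm{Acc}}(\Sigma)\leq\alpha$. The only mildly delicate step is the entropy bookkeeping around the indicator $B$, specifically controlling $H(B)$ in terms of $H(\delta)$ by invoking monotonicity of binary entropy on $[0,1/2]$; once that is in place the proof reduces to a clean application of \lemref{lem:bin_coeff_bounds}.
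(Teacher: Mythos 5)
Your proposal is correct and takes essentially the same route as the paper's proof: the paper introduces the same indicator variable (there called $V$), bounds $H(X\mid Z)\leq H(X\mid Z,V)+H(V)$, uses the Hamming-ball count $\sum_{i=0}^{\epsilon n}\binom{n}{i}\leq 2^{nH(\epsilon)}$ to get $H(X\mid Z,V=1)\leq nH(\epsilon)$, and concludes from $I(X;Z)\leq I_{\text{Acc}}(\Sigma)\leq\alpha$. If anything, you are slightly more careful than the paper in noting that the step $H(B)\leq H(\delta)$ relies on monotonicity of binary entropy and hence on $\delta\leq 1/2$, a caveat the paper's proof passes over silently.
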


\begin{proof}
[Proof (adapted from \cite{arunachalam_optimal_2018}).]
Since $I_{\text{Acc}}(\Sigma) \leq \alpha$, by definition of accessible information we have that:
\begin{equation*}
  I(X:Z) \leq \alpha.  
\end{equation*}
Now, let $V$ be the indicator random variable for the event that $d_H(X,Z) \leq \epsilon n$. Then $\Pr[V=1] \geq 1 - \delta$. Conditioning to $V=1$ and given $Z$, $X$ can range over a set of only $\sum_{i=0}^{\epsilon n}\binom{n}{i} \leq 2^{nH(\epsilon)}$ $n$-bit strings (using \eqref{eqn:binomial_coefficient_bound}), we find that:
\begin{equation*}
    H(X|Z,V=1) \leq n H(\epsilon)
\end{equation*}

We then lower bound $I(X:Z)$ as follows:

\begin{align*}
    I (X:Z) &= H (X) - H (X|Z) \\
&\geq H (X) - H (X|Z, V) - H (V) \\
&= H (X) - \Pr[V = 1] H(X|Z, V = 1) \\ - &\Pr[V = 0] H (X|Z, V = 0) - H (V) \\
&\geq n - (1 - \delta)H(\epsilon)n - \delta n - H(\delta)\\
&= (1 - \delta)(1 - H(\epsilon))n - H(\delta).
\end{align*}

Combining the two bounds we get the desired result:

\[ (1 - \delta)(1 - H(\epsilon))n - H(\delta) \leq \alpha. \]
\end{proof}

Finally, we recall the last lemma, from~\cite{jain_new_2009}:

\begin{lemma}[Lemma 5 in~\cite{jain_new_2009}]
\label{lem:copies}
Let $m \geq 1$ be an integer and let $X,Y$ be correlated random variables. Let $\mu_x$ be the distribution of $Y|(X = x)$. Let $X', Y'$
represent joint random variables such that $X'$
is distributed identically
to $X$ and the distribution of $Y'
|(X' = x)$ is $\mu^m_x$
($m$ independent copies of $\mu_x$). Then,

\[
I(X': Y') \leq m  I(X : Y ).
\]

\end{lemma}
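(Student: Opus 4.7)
The plan is to expand the mutual information as a difference of entropies, namely $I(X';Y') = H(Y') - H(Y'|X')$, and then to handle the two terms separately, exploiting the conditional independence built into the hypothesis. The conditional term will turn out to equal $m\, H(Y|X)$ exactly, while the unconditional term will be upper bounded by $m\, H(Y)$; subtracting the two yields the stated factor-$m$ bound.

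First, I would compute $H(Y'|X')$. By assumption, given $X' = x$, the random variables $Y'_1, \dots, Y'_m$ are independent with common law $\mu_x$, so conditional entropy is additive and gives $H(Y'_1, \dots, Y'_m \mid X' = x) = m\, H(\mu_x) = m\, H(Y \mid X = x)$. Taking expectation over $X' \sim X$ (since $X'$ is distributed as $X$) yields $H(Y' \mid X') = m\, H(Y \mid X)$. This step is essentially bookkeeping, as it only uses the definition of $\mu_x$ and the conditional-independence hypothesis.

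Second, I would bound $H(Y')$ from above using subadditivity of Shannon entropy: $H(Y'_1, \dots, Y'_m) \leq \sum_{i=1}^m H(Y'_i)$. The key observation here is that the marginal distribution of each $Y'_i$ is $\sum_x \Pr[X' = x]\, \mu_x$, which coincides with the law of $Y$, hence $H(Y'_i) = H(Y)$ and the sum is $m\, H(Y)$. Combining the two estimates gives
\[
I(X';Y') \;=\; H(Y') - H(Y' \mid X') \;\leq\; m\, H(Y) - m\, H(Y \mid X) \;=\; m\, I(X;Y).
\]

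I expect no real obstacle: the only subtle point is the asymmetry between the two terms (equality on the conditional side, inequality on the unconditional side), which reflects the fact that the $m$ copies are independent only conditionally on $X'$ and become correlated once $X'$ is marginalized out. An alternative proof via the chain rule, $I(X';Y') = \sum_i I(X'; Y'_i \mid Y'_{<i})$, followed by bounding each term by $I(X;Y)$, would also work, but the entropy-difference argument above is the cleanest route.
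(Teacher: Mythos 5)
Your proof is correct. Note that the paper itself does not prove this statement --- it is imported verbatim as Lemma~5 of the cited reference of Jain et al., so there is no in-paper argument to compare against. Your entropy-difference route is sound and self-contained: the conditional term $H(Y'\mid X') = m\,H(Y\mid X)$ follows exactly from the conditional i.i.d.\ structure, the unconditional term is handled by subadditivity together with the observation that each $Y'_i$ has the same marginal law as $Y$ (since $\sum_x \Pr[X'=x]\,\mu_x$ is the law of $Y$), and subtracting gives the factor-$m$ bound. The one implicit assumption --- that all entropies involved are finite so the difference decomposition $I = H - H(\cdot\mid\cdot)$ is legitimate --- holds here since the variables are discrete over finite alphabets, which is the only regime the paper uses the lemma in. Your remark about the asymmetry (exact equality conditionally, inequality unconditionally) correctly identifies where the slack comes from, and the chain-rule alternative you sketch is essentially the argument in the original reference.
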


\end{document}